\def\ps@headings{%
\def\@oddhead{\mbox{}\scriptsize\rightmark \hfil \thepage}%
\def\@evenhead{\scriptsize\thepage \hfil \leftmark\mbox{}}%
\def\@oddfoot{}%
\def\@evenfoot{}}
\newtheorem{theorem}{Theorem}[section]
\newtheorem{corollary}[theorem]{Corollary}
\def\ps@headings{%
\def\@oddhead{\mbox{}\scriptsize\rightmark \hfil \thepage}%
\def\@evenhead{\scriptsize\thepage \hfil \leftmark\mbox{}}%
\def\@oddfoot{}%
\def\@evenfoot{}}
\makeatother \pagestyle{empty}
\begin{document}
\title{Localized Dimension Growth in Random Network Coding: A Convolutional Approach}
\author{\authorblockN{Wangmei Guo, Ning Cai}
\authorblockA{The State Key Lab. of ISN,\\
Xidian University, Xi'an, China\\
Email: \{wangmeiguo, caining\}@mail.xidian.edu.cn} \and
\authorblockN{Xiaomeng Shi, Muriel M\'{e}dard}
\authorblockA{Research Laboratory of Electronics,\\
MIT, Cambridge, USA\\
Email: \{xshi, medard\}@mit.edu}
\thanks{This work is funded by the National Science Foundation (NSF) under grant No.60832001, and partially presented in conference AEW 2010.}
}

\maketitle

\normalsize 

\begin{abstract} We propose an efficient \textit{Adaptive Random Convolutional
Network Coding} (ARCNC) algorithm to address the issue of field size in random network coding. ARCNC operates as a convolutional code,
with the coefficients of local encoding kernels chosen randomly over a small finite field. The lengths of local encoding kernels increase
with time until the global encoding kernel matrices at related sink nodes all have full rank. Instead of estimating the necessary field
size a priori, ARCNC operates in a small finite field. It adapts to unknown network topologies without prior knowledge, by
locally incrementing the dimensionality of the convolutional code. Because convolutional codes of different constraint lengths can coexist in different portions of the network, reductions in decoding delay and memory overheads can be achieved with ARCNC. We show through analysis that this method performs no worse than random linear network codes in general networks, and can provide significant gains in terms of average decoding delay in combination networks.
\end{abstract}

\begin{keywords}
convolutional network code, random linear network code, adaptive
random convolutional network code, combination networks
\end{keywords}


\section{Introduction}\label{sec:introduction}
Since its introduction \cite{ACLY2000}, network coding has been shown to offer advantages in throughput, power consumption, and security in
both wireline and wireless networks. Field size and adaptation to unknown topologies are two of the key issues in network coding. Li et al.
showed constructively that the max-flow bound is achievable by linear algebraic network coding (ANC) if the field is sufficiently large for
a given deterministic multicast network \cite{LYC2003}, while Ho et al. \cite{HMK2006} proposed a distributed random linear algebraic
network code (RLNC) construction that achieves the multicast capacity with probability $(1-d/q)^\eta$, where $\eta$ is the number of
links with random coefficients. Because of the construction simplicity and the ability to adapt to unknown topologies, RLNC are
often preferred over deterministic network codes. While the construction in \cite{HMK2006} allows cycles, which lead to the creation of
convolutional codes, it does not make use of the convolutional nature of the resulting codes to lighten bounds on field size, which may
need to be large to guarantee that decoding succeeds at all sink nodes. Both block network codes (BNC) \cite{medard2003coding,JSCEEJT2005}
and convolutional network codes (CNC) \cite{OCNC2006,EF2004} can mitigate field size requirements. M\'edard et al.
introduced the concept of vector, or block network codes (BNC) \cite{medard2003coding}, and Xiao
et al. proposed a deterministic binary BNC to solve the combination network problem \cite{xiao2008binary}. BNC can operate on
smaller finite fields, but the block length used may need to be pre-determined according to the network size. In
discussing cyclic networks, both Li et al. and Ho et al. pointed out the equivalence between ANC in cyclic networks with delays, and
CNC \cite{LYC2003,HMK2006}. Because of coding introduced across the temporal domain, CNC in general does not have a field size constraint. 
Even though degree growth of the encoding kernel may lead to high computation complexity during decoding when there are many coding nodes along a path to a sink, it is often possible to achieve the network coding advantage by coding at a subset of nodes only \cite{kim2006minimizing}. In addition, the structure of CNC allows decoding to occur symbol-by-symbol, thus offer gains in decoding delay.  However, it may require long coding kernels when the network is unknown. As discussed by Jaggi et al. \cite{jaggi2004linear}, there exists equivalence relationships between ANC, BNC, and CNC. Nonetheless, all three schemes require some prior knowledge on the network topology. Overestimation for the worst case assumption can be wasteful, leading to high computation complexity, decoding delay, and memory overheads.

Our work extends the RLNC and CNC setup, allowing nodes to locally grow the dimensionality of the code until necessary. We propose an
efficient adaptive random convolutional network code (ARCNC) for multicast networks, with local encoding kernels chosen randomly
from a small field, and the code constraint length incremented locally at each node. Our scheme inherits the small field size
property of CNC, while taking advantage of the distributive nature of RLNC. The gains offered by ARCNC are three-fold.
First, it operates in a small finite field. Second, it adapts to unknown network topologies without prior knowledge.
Last, the localized adaptation allows convolutional codes with different code lengths to coexist in different portions of the network, leading to reduction in decoding delay and memory overheads associated with using a pre-determined field size or code length. 

The remainder of this paper is organized as follows: the ARCNC algorithm is proposed in Section~\ref{sec:algorithm} and its performance
analyzed in Section~\ref{sec:analysis}. As an example, the advantages of ARCNC is considered in a combination network in
Section~\ref{sec:examples}. Section~\ref{sec:conclusion} concludes the paper.

\section{Adaptive Randomized Convolutional Network Coding Algorithm}\label{sec:algorithm}
\subsection{Basic Model and Definitions}\label{sec:basicDefs}

We first introduce definitions which will be used throughout the paper. We model a communication network as a finite directed multigraph, denoted by $\mathcal{G}=(\mathcal{V},\mathcal{E})$. An
edge represents a noiseless communication channel on which one symbol is transmitted per unit time. In this paper, we consider the
multicast case. The source node is denoted by $s$, and the set of $d$ sink nodes is denoted by $T=\{r_1,\ldots,r_d\} \subset V$. For every
node $v \in \mathcal{V}$, denote the sets of incoming and outgoing channels to $v$ by $In(v)$ and $Out(v)$. An ordered pair $(e',e)$ of channels is called an adjacent pair when there exists a node $v$ with $e'\in In(v)$ and $e\in Out(v)$.

The symbol alphabet is represented by a base field, $\mathbb{F}_q$, throughout the paper. Assume source $s$ generates  a
message per unit time, consisting of a fixed number $m$ of symbols represented by an $m$-dim row vector $x \in \mathbb{F}_q^m$. We index
time $t$ to start from 0, hence $(t+1)$-th coding around occurs at time $t$. The message transmitted by $s$ can be represented by a power series $x(z)=\sum_{t\geq 0}{x_t z^{t}}$, where $x_t \in F_q^m$ is the message generated at time $t$ and $z$ denotes a unit-time delay. The data propagated over a channel $e$ is $y_e(z)$, a linear function of the source message, and $y_e(z)= x(z) f_e(z)$, where the $m$-dim column vector of rational power series, $f_e(z) = \sum_{t\geq0}f_{e,t}z^{t}$, is called the global encoding kernel over channel $e$. Viewed locally, $y_e(z)$ is a linear combination of messages over all incoming channels, represented as $y_e(z) = \sum_{e' \in In(v)}{k_{e',e}(z)y_{e'}(z)}$, where $k_{e',e}(z)= \sum_{t\geq0}k_{e',e,t}z^{t}$ is the local encoding kernel over the adjacent pair
$(e',e)$. Hence, $f_e(z) = \sum_{e' \in In(v)}{k_{e',e}(z)f_{e'}(z)}$. As discussed in \cite{OCNC2006}, $k_{e',e}(z)$ and $f_e(z)$ are
rational power series in the form of $\frac{p(z)}{1+zq(z)}$, where $p(z)$ and $q(z)$ are polynomials. Collectively,
we call the $|In(v)|\times|Out(v)|$ matrix $K_v(z)=(k_{e',e}(z))_{e'\in In(v), e\in Out(v)}$ the local encoding kernel matrix at node
$v$, and the $m \times |In(v)|$ matrix $F_r(z)=(f_{e}(z))_{e\in In(r)}$ the global encoding kernel matrix at sink $r$.

\subsection{Algorithm Statement for Acyclic Networks}\label{sec:algAcyclic}

\subsubsection{Encoding} at time 0, all local and global encoding kernels are set to 0. Source $s$ generates a
message $x = \sum_{t\geq0}x_t z^{t}$, where $x_t$ consists of $m$ symbols $(x_{t,1},x_{t,2},\cdots,x_{t,m})$. For each intermediate
node $v$, when a symbol is received on $e' \in In(v)$ at time $t$, it stores the symbol in memory as $y_{e',t}$, and chooses the $(t+1)$-th term $k_{e',e,t}$ of the local encoding kernel $k_{e',e}(z)$ uniformly randomly from $\mathbb{F}_q$ for $e \in Out(v)$. Node $v$ assigns registers to store the local encoding kernels, and form the outgoing symbol as

\begin{equation*}
y_{e,t}  = \sum_{e'\in In(v)}\left(\sum_{i=0}^t k_{e',e,i}y_{e',t-
i}\right)\,. \label{eq:yet}
\end{equation*}
In other words, the outgoing symbol is a random linear combination of symbols in the node's memory. The
$(t+1)$-th term of $f_e(z)$, $f_{e,t}$, is placed in the header of the outgoing message.

\subsubsection{Decoding}
at each time instant $t$, each sink node $r$ decides whether its global encoding kernel matrix is full rank. If so, it sends an ACK signal to its parent node. An intermediate node $v$ which has received ACKs from all its children at a time $t_0$ will send an ACK to its parent, and set all subsequent local encoding kernel coefficients $k_{e',e,t}$ to $0$ for all $t>t_0$, $e' \in In(v)$, and $e\in
Out(v)$. In other words, the constraint length of the local convolutional code increases until it is sufficient for downstream sinks to decode. Such automatic adaptation eliminates the need for estimating the field size or the constraint length a priori. It also allows nodes within the network to operate with different constraint lengths as needed.

Once its global encoding kernel matrix $F_r(z)$ is full rank, a sink node $r$ performs sequential decoding as introduced by Erez et al.
\cite{EF2004} to obtain the source message symbol-by-symbol. If $F_r(z)$ is not full rank, $r$ stores received messages and wait for more data to arrive. At time $t$, the algorithm is considered successful if all sink nodes can decode. At sink $r$, the local and global encoding kernels are $k_{e',e}(z)=k_{e',e,0}+k_{e',e,1}z+\cdots+k_{e',e,t}z^{t}$ and  $f_e(z)=f_{e,0}+f_{e,1}z+\cdots+f_{e,t}z^{t}$ respectively, where $k_{d,e,i}$ and $f_{e,i}$ are the encoding coefficients at time $i$. Sink $r$ can decode successfully if there exists at least $m$ linear independent incoming channels, i.e., the determinant of $F_r(z)$ is a non-zero polynomial. At time $t$, $F_r(z)$ can be written as $F_r(z)=F_0+F_1z+\cdots+F_{t}z^{t}$, where $F_i$ is the global encoding kernel matrix at time $i$. Computing the determinant of $F_r(z)$ at every time instant is complex, so we test instead the following conditions, introduced in \cite{CG2009,MS1968} to determine decodability at a sink $r$. The first condition is necessary, while the second is both necessary and sufficient.

\begin{enumerate}
\item $rank\left( {\begin{array}{*{20}c}
       {F_0} & {F_1} &  \cdots  & {F_{t}}  \\
    \end{array}} \right) = m$
\item $rank(M_{t})-rank(M_{t-1})=m \,,$
where
\begin{equation*}
     M_i = \left( {\begin{array}{*{20}c}
     F_0 & F_1    & \cdots & F_{i}  \\
      0  & \ddots & \ddots & \vdots \\
      0  & \cdots &   F_1  & F_1    \\
      0  & \cdots &    0   & F_0    \\
\end{array}} \right).
\end{equation*}
\end{enumerate}
Each sink $r$ checks the two conditions in order. If both pass, $r$ sends an ACK signal to its parent; otherwise, it waits for more data to arrive. Observe that as time progresses, $F_r(z)$ grows in size, until decodability is achieved. This scheme for verifying the invertibility of $F_r(z)$ is based on Massey's theory of decodable convolutional codes, which
transfers the determinant calculation of a polynomial matrix into the rank computation of extended numerical matrices. We do not elaborate on the details and refer interested readers to the original work \cite{MS1968}.


\subsection{Algorithm Statement for Cyclic Networks}

In a cyclic network, a sufficient condition for a convolutional code to be successful is that the constant coefficient matrix consisting of all local encoding kernels be nilpotent \cite{KM2003,CG2009}; it can be shown that this condition is satisfied if we code over an acyclic topology at time 0 \cite{CG2009}. Instead of flooding the network starting from the source, at time $0$, we first find $m$ disjoint paths for each sink node from the source. For adjacent pairs along any of these paths, local encoding kernels $k_{e',e,0}$ are chosen uniformly randomly over $\mathbb{F}_q$; for edges not along such paths, local encoding kernels are set to zero at time 0. For intermediate nodes, outgoing message symbols are computed as a random linear combination of its incoming received symbols:
$y_{e,0} = \sum_{e' \in In(v)}
{k_{e',e,0} } y_{e',0}  = x_0 f_{e,0}$. After initialization, the algorithm proceeds exactly the same as in the acyclic case.

\section{Analysis of ARCNC}\label{sec:analysis}
\subsection{Success probability}

Discussions in \cite{HMK2006, LYC2003, KM2003} state that in a network with delays, ANC gives rise to random processes which can be written algebraically in terms of a delay variable $z$. In other words, a convolutional code can naturally evolve from the message propagation and the linear encoding process. ANC in the delay-free case is therefore equivalent to CNC with constraint length 1. Similarly, using a CNC with constraint length $l>1$ on a delay-free network is equivalent to performing ANC on the same network, but with $l-1$ self-loops attached to each encoding node. Each self-loop carries $z, z^2, \ldots, z^{l-1}$ units of delay respectively. The ARCNC algorithm we have proposed therefore falls into the framework given by Ho et al. \cite{HMK2006}, in the sense that the convolution process either arises naturally from cycles with delays, or can be considered as computed over self-loops appended to acyclic networks. From \cite{HMK2006}, we have the following theorem,

\begin{theorem}\label{thm:prob}
For multicast over a general network with $d$ sinks, the ARCNC algorithm over $\mathbb{F}_q$
can achieve a success probability of at least $(1-d/q^{t+1})^{\eta}$ at time $t$, if $q^{t+1}>d$, and $\eta$ is the number of links with
random coefficients.
\end{theorem}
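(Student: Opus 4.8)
\begin{IEEEproof}
The plan is to reduce the statement to the random linear coding bound of Ho et al.\ \cite{HMK2006}, with the effective alphabet size inflated from $q$ to $q^{t+1}$. Their bound rests on two structural facts. First, simultaneous decodability at all $d$ sinks is equivalent to the non-vanishing of the product $P=\prod_{r\in T}\det F_r$ of the sink determinants, and $P$ is a nonzero polynomial in the random link coefficients whenever the multicast is feasible. Second, each single-sink determinant $\det F_r$ is affine, i.e.\ of degree at most $1$, in the coefficient of any one link (each coefficient occupies a single entry of the Koetter--M\'edard transfer matrix), so $P$ has degree at most $d$ in each link variable. A product-form Schwartz--Zippel argument, conditioning on the links one at a time, then gives $\Pr[P\neq 0]\ge(1-d/q)^{\eta}$ as soon as the alphabet exceeds $d$.

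First I would transcribe this structure into the convolutional setting. At time $t$ the code has constraint length $t+1$: the local kernel on each of the $\eta$ randomized links is a polynomial $k_{e',e}(z)=\sum_{i=0}^{t}k_{e',e,i}z^{i}$ whose $t+1$ coefficients are drawn independently and uniformly from $\mathbb{F}_q$, so each link kernel is one of exactly $q^{t+1}$ distinct polynomials of degree at most $t$. I would carry out the whole argument over the field of rational functions $\mathbb{F}_q(z)$, in which every $f_e(z)$, every $\det F_r(z)$, and hence $P(z)=\prod_{r\in T}\det F_r(z)$ resides; decodability of all sinks at time $t$ is exactly the condition $P(z)\not\equiv 0$. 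Since the transfer-matrix algebra is unchanged when the scalar entries are replaced by elements of $\mathbb{F}_q(z)$, both facts above survive verbatim: $P$ still has degree at most $d$ in each link kernel $k_{e',e}(z)$ regarded as a single indeterminate over $\mathbb{F}_q(z)$, and feasibility of the multicast still supplies a symbolic assignment with $P\not\equiv 0$.

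The heart of the argument is a root count performed one link at a time. Fixing the kernels of all links but one, say $e$, I would write $P=\sum_{j=0}^{d}C_j(z)\,k_e(z)^{j}$ with $C_j(z)\in\mathbb{F}_q[z]$ and $P\not\equiv 0$. Regarded as a polynomial of degree at most $d$ in the single variable $k_e$ over the field $\mathbb{F}_q(z)$, $P$ has at most $d$ roots in $\mathbb{F}_q(z)$; hence at most $d$ of the $q^{t+1}$ candidate polynomials of degree at most $t$ can force $P(z)\equiv 0$. The conditional probability that link $e$ is bad is therefore at most $d/q^{t+1}$, and feeding this per-link bound into the same inductive product argument yields $\Pr[P\not\equiv 0]\ge(1-d/q^{t+1})^{\eta}$, valid precisely when $q^{t+1}>d$ so that the per-link factor stays positive.

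I expect the main obstacle to be the root-counting translation in the last step. The failure event is that $P(z)$ vanishes as a polynomial in $z$, not at a single value of $z$, so one must justify recasting ``$P(z)\equiv 0$'' as ``$k_e(z)$ is a root of $P$ over the function field $\mathbb{F}_q(z)$'' and then argue that the $q^{t+1}$ polynomials of degree at most $t$ meet at most $d$ of these rational-function roots. Verifying that the degree-at-most-one-per-link structure genuinely persists after passing from $\mathbb{F}_q$ to $\mathbb{F}_q(z)$---in particular in the cyclic case, where $(I-F)^{-1}$ introduces denominators of the form $1+zq(z)$---is the remaining point that needs care.
\end{IEEEproof}
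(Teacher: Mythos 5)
Your proof is correct, but it takes a genuinely different route from the paper's. The paper's own proof is a two-line reduction: it groups the $t+1$ coefficients of each local kernel into a vector, asserts that this vector ``corresponds to a random element of the extension field $\mathbb{F}_{q^{t+1}}$,'' and then invokes the RLNC bound of Ho et al.\ over $\mathbb{F}_{q^{t+1}}$ as a black box. You instead never leave the function field: you keep each kernel as a polynomial of degree at most $t$, observe that the set of such kernels is a $q^{t+1}$-element \emph{subset} of $\mathbb{F}_q(z)$, check that the product of sink determinants retains the degree-at-most-$d$-per-link structure over $\mathbb{F}_q(z)$, and re-run the link-by-link root-counting induction with that subset as the evaluation set. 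What your route buys is rigor that the paper's shortcut arguably lacks: identifying degree-$\leq t$ polynomials with $\mathbb{F}_{q^{t+1}}$ is only a vector-space bijection, not a ring isomorphism---convolution of two kernels adds degrees, whereas multiplication in $\mathbb{F}_{q^{t+1}}$ reduces modulo an irreducible polynomial---so the code ARCNC actually runs at time $t$ is not an RLNC over $\mathbb{F}_{q^{t+1}}$, and the cited theorem does not literally apply to it. Your key observation, that Schwartz--Zippel root counting needs only a finite subset of a field (here of cardinality $q^{t+1}$) rather than a subfield, is precisely what legitimizes the per-link bound $d/q^{t+1}$, and it is the step the paper glosses over; what the paper's approach buys in exchange is brevity and a direct appeal to a known result. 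The two caveats you flag are the right ones and both resolve: viewing $P$ as a polynomial in the kernel variables with coefficients in $\mathbb{F}_q(z)$ makes ``$P(z)\equiv 0$'' exactly the statement that the chosen kernel is a root over that field, and in the cyclic case one clears the denominator $\det(I-zF)$, whose constant term is $1$, so that only the Edmonds-type numerator---affine in each kernel variable---enters the degree count.
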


\begin{proof} At node $v$, the local encoding kernel $k_{e',e}(z)$ at time $t$ is a polynomial with maximal degree $t$, i.e.,  $k_{e',e}(z)=k_{e',e,0}+k_{e',e,1}z+\cdots+k_{e',e,t}z^{t}$, where $k_{e',e,i}$ is randomly chosen over $\mathbb{F}_q$. If we group the encoding coefficients, the ensuing vector, $k_{e',e}=\{k_{e',e,0},k_{e',e,1},\cdots,k_{e',e,t}\}$, is of
length $t+1$, and corresponds to a random element over the extension field
 $\mathbb{F}_{q^{t+1}}$. Using the result in \cite{HMK2006}, we conclude that the success probability of ARCNC at time $t$ is at least $(1-d/q^{t+1})^{\eta}$, as long as $q^{t+1}>d$.
\end{proof}

We could similarly consider the analysis done by Balli et al. \cite{BYZ2009}, which states that the success probability is at least
$(1-d/(q-1))^{|J|+1}$, $|J|$ being the number of encoding nodes, to show that a tighter lower bound can be given on the success probability
of ARCNC, when $q^{t+1}>d$.

\subsection{Stopping time}\label{sec:stoppingTime}
We define the stopping time $T_i$ for sink $i$, $1\leq i\leq d$, as the time it takes $i$ to achieve decodability. Also denote by $T_N$ the time it takes for all sinks in the network to successfully decode, i.e., $T_N=\max\{T_1,\ldots,T_d\}$. Then we have the following corollary:

\begin{corollary}
For any given $0<\varepsilon<1$, there exists a $T_0>0$ such that for any $t \geq T_0$, ARCNC solves the multicast problem with probability
at least $1-\varepsilon$, i.e., $P(T_N > t)  < \varepsilon$.\label{crlry2}
\end{corollary}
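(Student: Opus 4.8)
The plan is to obtain the corollary directly from Theorem~\ref{thm:prob}, by identifying the event behind the success probability at time $t$ with $\{T_N \leq t\}$ and then pushing the guaranteed lower bound above $1-\varepsilon$ for all sufficiently large $t$. First I would note that ``ARCNC solves the multicast problem at time $t$'' means every sink has a full-rank global encoding kernel matrix by time $t$; since $T_N = \max\{T_1,\ldots,T_d\}$, this event is exactly $\{T_N \leq t\}$. Theorem~\ref{thm:prob} then gives, whenever $q^{t+1} > d$,
\[
P(T_N \leq t) \geq \left(1 - \frac{d}{q^{t+1}}\right)^{\eta},
\qquad\text{hence}\qquad
P(T_N > t) \leq 1 - \left(1 - \frac{d}{q^{t+1}}\right)^{\eta}.
\]

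Next I would analyze the right-hand side as $t$ grows. Since $q \geq 2$, the term $d/q^{t+1}$ decreases monotonically to $0$, so the base $1 - d/q^{t+1}$ increases monotonically to $1$ and, for fixed $\eta$, so does $(1 - d/q^{t+1})^{\eta}$; consequently the upper bound $1 - (1 - d/q^{t+1})^{\eta}$ decreases monotonically to $0$. It therefore suffices to locate a single threshold $T_0$ at which this bound first falls below $\varepsilon$, because monotonicity then keeps it below $\varepsilon$ for every $t \geq T_0$. Solving the threshold condition explicitly, $1 - (1 - d/q^{t+1})^{\eta} < \varepsilon$ is equivalent to $d/q^{t+1} < 1 - (1-\varepsilon)^{1/\eta}$, which rearranges to
\[
q^{t+1} > \frac{d}{1 - (1-\varepsilon)^{1/\eta}},
\]
so I would take
\[
T_0 = \left\lceil \log_q \frac{d}{1 - (1-\varepsilon)^{1/\eta}} \right\rceil - 1.
\]

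I expect no serious obstacle, as the argument is a short monotonicity-and-limit computation layered on Theorem~\ref{thm:prob}. The only points meriting care are logical bookkeeping. The theorem supplies merely a \emph{lower} bound on $P(T_N \leq t)$, so I must argue that driving this lower bound above $1-\varepsilon$ is sufficient, which it is because $P(T_N \leq t)$ dominates it. I must also verify that the chosen $T_0$ respects the side condition $q^{t+1} > d$ under which the theorem is valid; this is automatic, since $0 < \varepsilon < 1$ and $\eta \geq 1$ force $1 - (1-\varepsilon)^{1/\eta} \in (0,1)$, making the right-hand side of the displayed inequality strictly larger than $d$. (If the formula for $T_0$ should come out nonpositive for extreme parameter values, one simply replaces it by $\max\{1, T_0\}$, which still satisfies both requirements.)
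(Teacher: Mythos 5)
Your proposal is correct and follows essentially the same route as the paper: the paper also applies Theorem~\ref{thm:prob}, verifies the side condition $q^{T_0+1} \geq d$, and chooses the identical threshold $T_0 = \left\lceil \lg_q d - \lg_q\bigl(1-\sqrt[\eta]{1-\varepsilon}\bigr) \right\rceil - 1$, which is just your $\left\lceil \log_q \frac{d}{1-(1-\varepsilon)^{1/\eta}} \right\rceil - 1$ written with the logarithm split. The only cosmetic difference is that the paper propagates the bound to all $t \geq T_0$ via monotonicity of $P(T_N > t)$ in $t$, while you use monotonicity of the bound $1-(1-d/q^{t+1})^{\eta}$ itself; both are immediate.
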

\proof Let $T_0  = \left\lceil \lg_q d-\lg_q(1-\sqrt[\eta]{1-\epsilon}) \right\rceil -1 $, then $T_0 + 1 \geq \lceil \log_q d \rceil$ since
$0 < \varepsilon <1$, and $(1-d/q^{T_0+1})^\eta>1-\varepsilon$. Applying Theorem~\ref{thm:prob} gives $P(T_N > t) \leq P(T_N >T_0) <
1-(1-d/q^{t+1})^{\eta}< \varepsilon$ for any $t\geq T_0$,
\endproof
Corollary~\ref{crlry2} shows that $T_N$ is a valid random variable, and ARCNC converges in a finite amount of time for a multicast connection.

Another relevant measure of the performance of ARCNC is the average stopping time $E[T]=\frac{1}{d}\sum_{i=1}^{d}T_i$. Observe that $E[T] \leq E[T_N]$, where
\begin{align*}
  E[T_N] & =\sum_{i=0}^{\infty}tP(T_N=t)\\
   & = \sum_{t=1}^{\lceil lg_q d\rceil -1}P(T_N\geq t) + \sum_{t= \lceil lg_q d\rceil}^{\infty}{P(T_N\geq t)}\\
         & \le \lceil lg_q d\rceil -1 + \sum_{t=\lceil lg_q d\rceil}^{\infty}[1-(1-\frac{d}{q^t})^{\eta}] \\
         & = \lceil lg_q d\rceil -1 + \sum_{k=1}^{\eta}(-1)^{k-1}\left(\begin{array}{c}
                                                           \eta \\
                                                           k
                                                         \end{array}
     \right) \frac{d^k}{q^{\lceil lg_q d\rceil k}-1}\,.
\end{align*}
When $q$ is large, the summation term becomes $1-(1-d/q)^\eta$ by the binomial expansion.  Hence as $q$ increases, the second term above
diminishes to $0$, while the first term $\lceil lg_q d\rceil -1$ is 0. $E[T]$ is therefore upper-bounded by a term converging to 0; it is
also lower bounded by 0 because at least one round of random coding is required. Therefore, $E[T]$ converges to $0$ as $q$
increases.  In other words, if the field size is large enough, ARCNC reduces in effect to RLNC.

Intuitively, the average stopping time of ARCNC depends on the network topology. In RLNC, field size is determined by the
worst case node. This process corresponds to having all nodes stop at $T_N$ in ARCNC. ARCNC enables each node to decide locally what is a good constraint length to use, depending on side information from downstream nodes. The corresponding effective field size is therefore expected to be smaller than in RLNC. Two possible consequences of a smaller effective field size are reduced decoding delay, and reduced memory requirements.

\subsection{Complexity}
To study the computation complexity of ARCNC, first observe that once the adaptation process terminates, the amount of computation needed
for the ensuing code is no more than a regular CNC. In fact, the expected computation complexity is proportional to the average code
length of ARCNC. We therefore omit the details of the complexity analysis of regular CNC here and refer interested readers to \cite{EF2004}.

For the adaptation process, the encoding operations are described by $f_{e,t}=\sum_{e'\in In(v)}(\sum_{i=0}^{t}k_{e',e,i}f_{e',t-i})$. If the algorithm stops at time $T_N$, then the number of operations in the encoding steps is $O(D_{in}|\mathcal{E}|T_N^2m)$, where $D_{in}$ represents the maximum input degree over all nodes.

To determine decodability at a sink $r$, we check if the rank of the global encoding matrix $F_r(z)$ is $m$. A straight-forward approach is to check whether the determinant of $F_r(z)$ is a non-zero polynomial. Alternatively, Gaussian elimination could also be applied. At time $t$, because $F_r(z)$ is an $m \times |In(r)|$ matrix and each entry is a polynomial with degree $t$, the complexity of checking if $F_r(z)$ is full rank is $O(D_{in}^22^mmt^2)$. Instead of computing the determinant or using Gaussian elimination directly, we propose to check the conditions given in Section~\ref{sec:algAcyclic}. For each sink $r$, at time $t$, determining $rank\left({\begin{array}{*{20}c} F_0 & F_1 & \cdots F_{t} \end{array}}\right)$ requires a computation complexity of $O(D_{in}^2mt^2)$. If the first test passes, we then need to calculate $rank(M_{t})$ and $rank(M_{t-1})$. Observe that $rank(M_{t-1})$ was computed during the last iteration. $M_t$ is a $(t+1)|In(r)| \times (t+1)|In(r)|$ matrix over field $\mathbb{F}_q$. The complexity of calculating $rank(M_t)$ by Gaussian elimination is $O(D_{in}^2mt^3)$. The process of checking decodability is performed during the adaptation process only, hence the computation complexity here can be amortized over time after the coding coefficients are determined. In addition, as decoding occurs symbol-by-symbol, the adaptation process itself does not impose any additional delays.

\section{Examples}\label{sec:examples}
ARCNC adapts to the topology of general networks by locally increasing the convolutional code length, and generating coding coefficients
randomly. Such adaptation allows nodes to code with different lengths, thus possibly reducing decoding delay and memory overheads associated with overestimating according to the worst case. As examples, next we consider a small combination network to illustrate how ARCNC operates, and how delay and memory overheads can be measured. We also consider a more general combination network to show that ARCNC can obtain significant gains in terms of decoding delay in this case.

A $n\choose m$ combination network contains a single source $s$ that multicasts $m$ independent messages over $\mathbb{F}_q$ through $n$ intermediate nodes to $d$ sinks \cite{NWT2005}; each sink is connected to a distinct set of $m$ intermediate nodes, $d={n\choose m}$. Assuming unit capacity links, the min-cut to each sink is $m$. It can be shown that in combination networks, routing is insufficient and network coding is needed to achieve the multicast capacity $m$.

\subsection{A $4 \choose 2$ combination network}

Fig.~\ref{fig:42ex} illustrates a simple $4\choose 2$ combination network. To see how ARCNC operates, let the messages generated by source $s$ be $\sum_{t=0}^{\infty}(a_t, b_t)z^t$. Assume field size is $q=2$. Observe that only $s$ is required to code; intermediate nodes relay on received messages directly. At time 0, $s$ chooses randomly the local encoding kernel matrix. Suppose the
realization is
\[K_s(z)|_{t=0}=\left(
               \begin{array}{cccc}
                 1 & 0 & 1 & 1 \\
                 0 & 1 & 1 & 1 \\
               \end{array}
             \right)\,.
\]
The first 5 sinks can therefore decode directly at time 0, but sink $r_6$ cannot. Therefore, at time $1$, $s$ increases the
convolutional code length for the right two intermediate nodes. Suppose the updated local encoding kernel is
\[K_s(z)|_{t=1}=\left(
               \begin{array}{cccc}
                 1 & 0 & 1 & 1+z \\
                 0 & 1 & 1 & 1 \\
               \end{array}
             \right)\,.
\]
Sink $r_6$ is now capable of decoding. It therefore acknowledges its parents, which in turn causes $s$ to stop incrementing the corresponding code length. By decoding sequentially, $r_6$ can recover messages $(a_0,b_0)$ at time 1, $(a_1,b_1)$ at time 2, and $(a_{t-1},b_{t-1})$ at time
$t$. Observe that for sinks $r_2$ to $r_5$, which are also connected to the two right intermediate nodes, the global encoding kernels
increases in length until time 1 as well. In other words, these sinks decode with minimal delay of 0, but require twice the memory when
compared to $r_1$.

\begin{figure}[t!]
  \centering
  \includegraphics[width=8.5cm]{./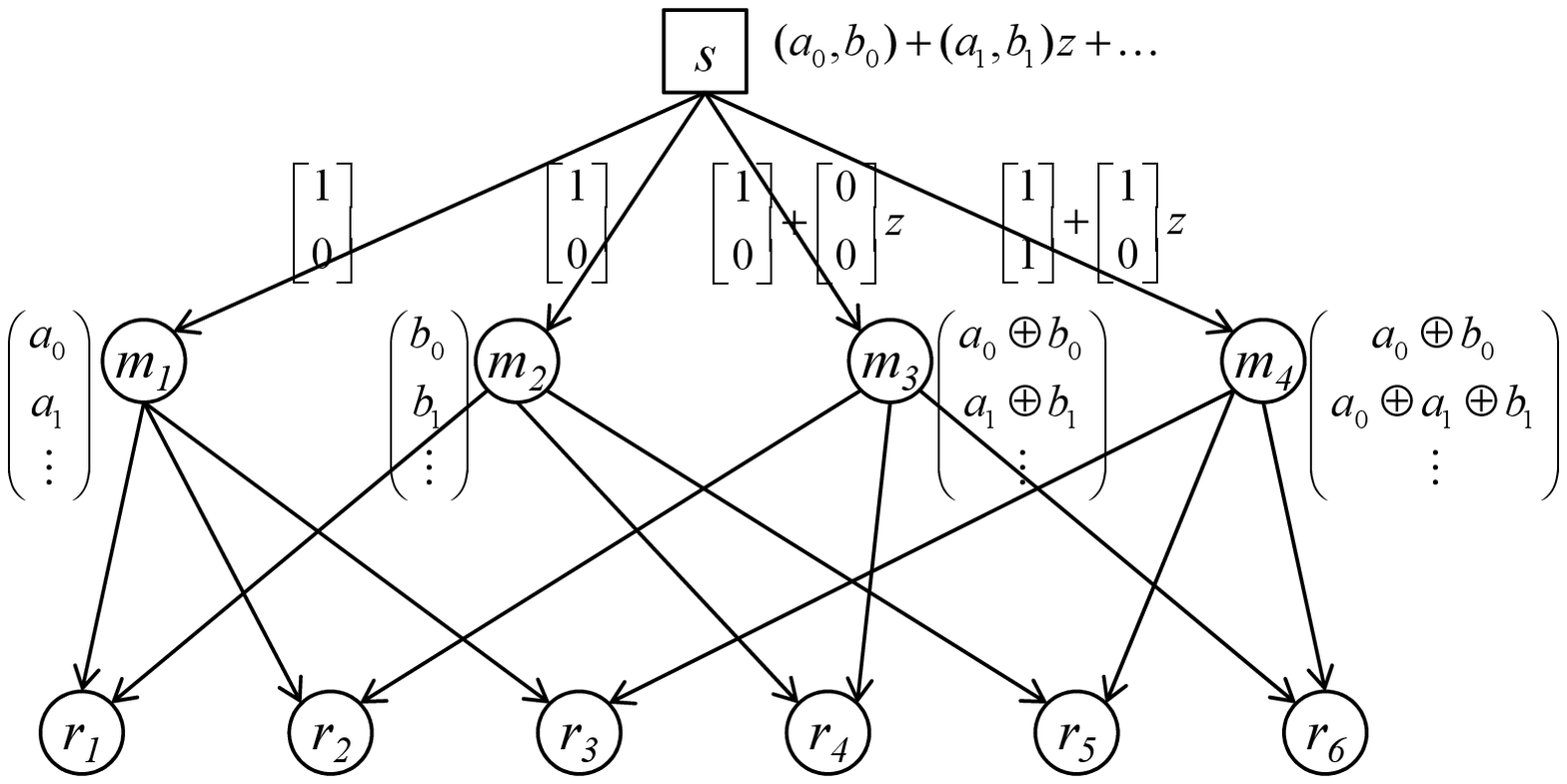}\\
  \caption{A $4\choose 2$ combination network.}
  \label{fig:42ex}
\end{figure}

In this example, at the single encoding node $s$, the code lengths used are $(1,1,2,2)$, with an average of $3/2$. At the sinks, the decoding delays are $(0,0,0,0,0,1)$, with an average of $1/6$. For the same ${n\choose 2}$ combination network, the deterministic BNC algorithm given by Xiao et al. \cite{xiao2008binary}, designed specifically for combination networks, requires an average decoding delay of 1, since the entire block of data needs to be received before decoding, and the block length is 2. In terms of memory, the BNC algorithm requires 4 bits per node to store data, while ARCNC requires $2$ bits at $r_1$, and $4$ bits at all other nodes, with an overall average of $\frac{42}{11}$. In other words, ARCNC learns the topology of this network automatically, and achieves much lower decoding delays without compromising the amount of memory needed, when compared to the deterministic BNC algorithm designed specifically for combination networks.

Furthermore, if RLNC is used, even for a single sink $r_1$ to achieve decodability a at time $0$, the field size needs to be a minimum of $2^3$ for the decoding probability to be $\frac{49}{64}$. More coding rounds will be needed to achieve a higher success probability. In other words, with RLNC over $\mathbb{F}_{2^3}$, the average decoding delay will be higher than 1, and the amount of memory needed is at the minimum 6 bits per node. In other words, ARCNC adapts to the network topology at significantly lower costs than RLNC, achieving both lower decoding delays and lower memory overheads, while operating with lower complexities in a smaller field.

\subsection{Decoding delays in a $n \choose m$ combination network}
We now consider a general $n \choose m$ combination network, and show that the average decoding delay can be
significantly improved by ARCNC  when compared to the deterministic BNC algorithm. Recall from definitions in Section~\ref{sec:stoppingTime} that the average decoding delay is the average stopping time $E[T] = E\left[ \frac{1}{d}\sum_{i=1}^{d} T_i \right] =  E[T_i]$. At time $t-1$, for sink node $i$, if it has not stopped increasing the constraint length of the convolution code, the global encoding kernel is a $m\times m$ matrix of degree $t-1$ polynomials in $z$ over $\mathbb{F}_q$. This matrix has full rank with probability $Q=(q^{tm}-1)(q^{tm}-q^t)\cdots(q^{tm}-q^{t(m-1)})/q^{tm^2}$, so the probability that sink $i$ decodes after time $t-1$ is $ P(T_i\geq t) = 1- Q$. The average stopping time over all sink nodes is then upper bounded by
\begin{align}
E[T] & = E[T_i]=\sum_{t=1}^\infty P(T_i\geq t) < \sum_{t=1}^\infty (1-(1-\frac{1}{q^{t}})^m) \notag\\
     & = \sum_{k=1}^m (-1)^{k-1} \left(\begin{array}{c} m
\\ k \\ \end{array} \right) \frac{1}{q^k-1} \triangleq ET_{UB} \label{eq:ET}
\end{align}

First observe that $E[T]$ a function of $m$ and $q$, independent of the value of $n$.
In other words, if $m$ is fixed, but $n$ increases, the expected decoding delay does not change. Next observe that if $q$ is large,
$ET_{UB}$ becomes 0, consistent with the general analysis in Section~\ref{sec:stoppingTime}.

A similar upper bound can be found for the variance of $T$ as well. It can be shown that
\begin{align}
var(T)  & = \frac{E[T_i^2]}{d} + \frac{1}{d^2}\left[\sum_{i=1}^{d}\sum_{j\neq i}E[T_iT_j]\right] - E^2[T_i]\notag\\
& < \frac{ET^2_{UB}}{d} + \frac{m}{n}\rho_{UB} - (1+\frac{m}{n})(ET_{LB})^2\,, \label{eq:varT}
\end{align}

\noindent where $ET^2_{UB}$ is an upperbound for $E[T_i^2]$, $\rho_{UB}$ is an upperbound for $E[T_iT_j]_{i\neq j}$, and $ET_{LB}$ is a lowerbound for $E[T_i]$. Details of the derivation are given in the Appendix. All three quantities are functions of $m$ and $q$, independent of $n$. If $m$ and $q$ are fixed, as $n$ increases, $d$ also increases, and $var(T)$ diminishes to 0. Combining this result with a bounded expectation, what we can conclude is that even if more intermediate nodes are added, a large proportion of the sink nodes can still be decoded within a small number of coding rounds. On the other hand, if $m$ and $n$ are comparable in scale, for example, if $m=n/2$, then the bound above depends on the exact value of $ET^2_{UB}$, $\rho_{UB}$ and $ET_{UB}$. We leave the detailed analysis of this case for future studies.

Comparing with the deterministic BNC algorithm proposed by Xiao et al. \cite{xiao2008binary}, we can see that for a large combination network, with fixed $q$ and $m$, ARCNC achieves much lower decoding delay. In the BNC scheme, the block length is required to be
$p\geq n-m$ at the minimum. Thus the decoding delay increases at least linearly with $n$. Similar comparisons can be made with RLNC, and it is not hard to see that we can obtain gains in both decoding delay and memory.

So far we have used $n\choose m$ combination networks explicitly as an example to illustrate the operations and the decoding delay gains of ARCNC. It is important to note, however, that this is a very special network, in which only the source node is required to code, and each sink  shares at least $1$ parent with other ${n-1\choose m-1}$ sinks. If sink $r$ cannot decode, all other ${n-1\choose m-1}$ sinks sharing parents with $r$ are required to increase their memory capacity. Therefore, in combination networks, we do not see much gains in terms of memory overheads when compared with BNC algorithms. In more general networks, however, when sinks do not share ancestors with as many other sinks, ARCNC can achieve gains in terms memory overheads as well, in addition to decoding delay. Due to space limitations, we do not give any detailed analysis, but it can be shown, for example, that in an umbrella-shaped network, memory overheads can be significantly reduced with ARCNC when compared to other network codes.

\section{Conclusion}\label{sec:conclusion}
We propose an adaptive random convolutional network code (ARCNC), which operates in a small field, and locally and automatically adapts to the network topology by incrementally growing the constraint length of the convolution process. We show through analysis that ARCNC performs no worse than random algebraic linear network codes, and illustrate through a combination network example that it can reduce the average decoding delay significantly. ARCNC can also reduce memory overheads in networks where sinks do not share the majority of their ancestors with other sinks. One possible future direction of analysis is to characterize the behavior of this algorithm over random networks, the results of which will enable us to decide on the applicability of ARCNC to practical systems.

\bibliographystyle{IEEEtran}
\bibliography{References}
\section{Appendix}\label{sec:appendix}
In what follows, we derive the inequalities shown in Eq.~(\ref{eq:ET}) and (\ref{eq:varT}), which are used in analyzing the decoding delay gains of ARCNC in combination networks.

Consider a general $n \choose m$ combination network. At time $t-1$, for sink node
$i$, if it has not stopped increasing the constraint length of the convolution code, the global encoding kernel is a $m\times m$ matrix of
degree $t-1$ polynomials in $z$ over $\mathbb{F}_q$. This matrix has full rank with probability
$Q=(q^{tm}-1)(q^{tm}-q^t)\cdots(q^{tm}-q^{t(m-1)})/q^{tm^2}$, so the probability that sink $i$ decodes after time $t-1$ is
\begin{align*}
P(T_i\geq t) & = 1- Q = 1- \prod_{l=1}^m(1-\frac{1}{q^{tl}}) \, \quad t\geq 0. \label{eq:pt}\\
\end{align*}
The average stopping time for sink node $i$ is therefore bounded as follows
\begin{align*}
E[T_i] & = \sum_{t=1}^\infty tP(T_i = t) = \sum_{t=1}^\infty P(T_i\geq t)\\
     & = \sum_{t=1}^\infty (1-\prod_{i=1}^m(1-\frac{1}{q^{ti}}))  \\
     & < \sum_{t=1}^\infty (1-(1-\frac{1}{q^{t}})^m)\\
     & = \sum_{t=1}^\infty (1-\sum_{k=0}^m(-1)^k\left(\begin{array}{c} m \\ k \\ \end{array} \right)(\frac{1}{q^t})^k)\\
     & = \sum_{k=1}^m(-1)^{k-1}\left(\begin{array}{c} m \\ k \\ \end{array} \right)(\sum_{t=1}^{\infty}\frac{1}{q^{tk}})\\
     & = \sum_{k=1}^m (-1)^{k-1} \left(\begin{array}{c} m \\ k \\ \end{array} \right) \frac{1}{q^k-1} \\
     & \triangleq ET_{UB}
\end{align*}
\begin{align*}
E[T_i] & =  \sum_{t=1}^\infty(1-\prod_{l=1}^m(1-\frac{1}{q^{tl}})) \\
     &  > \sum_{t=1}^\infty (1-(1-\frac{1}{q^{tm}})^m)\\
     & = \sum_{k=1}^m (-1)^{k-1} \left(\begin{array}{c} m \\ k \\ \end{array} \right) \frac{1}{q^{km}-1} \\
     & \triangleq ET_{LB}
\end{align*}
The average stopping time over all sink nodes is thus
\begin{align*}
E[T] & = E\left[ \frac{1}{d}\sum_{i=1}^{d} T_i \right] = E[T_i]\\
& < \sum_{k=1}^m (-1)^{k-1} \left(\begin{array}{c} m
\\ k \\ \end{array} \right) \frac{1}{q^k-1} = ET_{UB}
\end{align*}

To bound the variance of $T$, first observe that for a given sink $r$, the number of other sinks which share at least one parent with $r$ is
$$\Delta=\left(\begin{array}{c} n-1 \\ m-1 \\ \end{array} \right)\,.$$ 
Since $d =\left(\begin{array}{c} n \\ m \\ \end{array} \right)$, we have
\begin{align*}
\frac{\Delta}{d}=\frac{{(n - 1)!}}{{(m - 1)!\left( {n - m} \right)!}} \cdot \frac{{m!\left( {n - m} \right)!}}{{n!}}=\frac{m}{n}\,,
\end{align*}
and
\begin{align*}
var(T)   & = E[T^2] - E^2[T] \\
         & = E\left[\left(\frac{1}{d}\sum_{i=1}^d T_i\right)^2\right] - E^2[T_i]\\
         & = \frac{E[T_i^2]}{d} + \frac{1}{d^2}\left[\sum_{i=1}^{d}\sum_{i\neq j}E(T_iT_j)\right] - E^2[T_i]\\
\end{align*}
To calculate $E[T_i^2]$ and $E[T_iT_j]$, first let $Y=T_i^2$. Since $\sum_{t=1}^\infty tp^t=\frac{1}{(1-p)^2}$ for any $0<p<1$, we have

\begin{align*}
E[T_i^2] & = \sum_{t=1}^\infty t^2P(T_i=t) \\
         & = \sum_{y=1}^\infty yP(Y=y) = \sum_{y=1}^\infty P(Y\geq y)\\
         & = \sum_{t=1}^\infty ((t+1)^2-t^2)P(T_i\geq t) \\
         & = \sum_{t=1}^\infty (2t+1) P(T_i\geq t) \\
         & < \sum_{t=1}^\infty (2t+1) (1-(1-\frac{1}{q^{t}})^m) \\
         & < ET_{UB} + 2 \sum_{k=1}^m (-1)^{k-1} \left(\begin{array}{c} m \\ k \\ \end{array} \right) \sum_{t=1}^\infty \frac{t}{q^{tk}}\\
         & = ET_{UB}  + 2 \sum_{k=1}^m (-1)^{k-1} \left(\begin{array}{c} m \\ k \\ \end{array} \right) (\frac{q^k}{q^k-1})^2 \\
         & \triangleq ET^2_{UB}
\end{align*}
Let $\rho_\lambda=E[T_iT_j]$ if sink nodes $i$ and $j$ share $\lambda$ intermediate nodes as parents, $0 \leq \lambda < m$. Then, $\rho_0 = E^2[T_i]$. When $\lambda \neq 0$, given $i$ succeeds at time $t_1$, the probability that sink $j$ has full rank before $t_2$ is
\begin{align*}
P(T_j < t_2|T_i=t_1) > \prod_{l=1}^{m-\lambda}(1-\frac{1}{q^{t_2l}}) >(1-\frac{1}{q^{t_2}})^{m-\lambda}\,.
\end{align*}
Hence,
\begin{align*}
 \rho_\lambda & = E(T_iT_j) \\
            & = \sum_{t_1=1}^\infty \sum_{t_2=t_1}^\infty t_1 t_2 P(T_i = t_1)P(T_j=t_2|T_i=t_1)\\
            & = \sum_{t_1=1}^\infty t_1 P(T_i = t_1) \sum_{t_2=1}^\infty t_2 P(T_j=t_2|T_i=t_1)\\
            & = \sum_{t_1=1}^\infty t_1 P(T_i = t_1) \sum_{t_2=1}^\infty P(T_j\geq t_2|T_i=t_1) \\
            & < \sum_{t_1=1}^\infty t_1 P(T_i = t_1) \sum_{t_2=1}^\infty (1-(1-\frac{1}{q^{t_2}})^{m-\lambda})\\
            & < \sum_{t_1=1}^\infty t_1 P(T_i = t_1) \sum_{k=1}^{m-\lambda}(-1)^{k-1}\left(\begin{array}{c} m-\lambda \\ k \\ \end{array} \right) \frac{1}{q^k-1}\\
            & < E[T_i](\sum_{k=1}^{m-\lambda}(-1)^{k-1}\left(\begin{array}{c} m-\lambda \\ k \\ \end{array} \right) \frac{1}{q^k-1})\\
            & \triangleq \rho_{\lambda,UB}
\end{align*}
Let $\rho_{UB} = \max\{\rho_{1,UB},\ldots,\rho_{m-1,UB}\}$, then the variance of the average decoding time of ARCNC is bounded as 
\begin{align*}
Var(T)  & < \frac{E[T_i^2]}{d} + \frac{\Delta}{d}\rho_{UB}-
(1+\frac{\Delta}{d})E^2[T_i]\\
& <\frac{ET^2_{UB}}{d} + \frac{\Delta}{d}\rho_{UB}-
(1+\frac{\Delta}{d})(ET_{LB})^2\\
& = \frac{ET^2_{UB}}{d} + \frac{m}{n}\rho_{UB}-
(1+\frac{m}{n})(ET_{LB})^2
\end{align*}
Since all of the upper-bound and lower-bound constants in the above expression are functions of $m$ and $q$ only, if $m$ is fixed and $n$ increases, $d$ also increases correspondingly, and $var(T) \rightarrow 0$.


\end{document}